\newcommand{\ket}[1]{|#1\rangle}
\newcommand{\braket}[2]{\langle #1|#2\rangle}
\newcommand{\id}{\mathrm{1}}
\newtheorem{lemma}{Lemma}
\date{}
\title{Efficient quantum processing of ideals in finite rings}
\author[1]{Pawel~M.~Wocjan}
\author[2,4]{Stephen~P.~Jordan}
\author[3]{Hamed~Ahmadi}
\author[3]{Joseph~P.~Brennan}
\affil[1]{\small{IBM Quantum, Thomas J Watson Research Center, Yorktown Heights, NY\footnote{pawel.wocjan@ibm.com}}}
\affil[2]{\small{Google Quantum AI, Seattle, WA\footnote{stephenjordan@google.com}}}
\affil[3]{\small{Department of Mathematics, University of Central Florida, Orlando}}
\affil[4]{\small{UMIACS, University of Maryland, College Park, MD}}
\begin{document}

\maketitle

\begin{abstract}
Suppose we are given black-box access to a finite ring $R$ that is not necessarily commutative, and a list of generators for an ideal $I$ in $R$. We show how to find an additive basis representation for $I$ in $\mathrm{poly}(\log|R|)$ quantum time. This generalizes a quantum algorithm of Arvind \emph{et al.} which finds a basis representation for $R$ itself. We then show this primitive allows quantum computers to rapidly solve a wide variety of problems regarding finite rings. In particular we show how to test whether two ideals are identical, find their intersection, find their quotient, prove whether a given ring element belongs to a given ideal, prove whether a given element is a unit, and if so find its inverse, find the additive and multiplicative identities, compute the order of an ideal, solve linear equations over rings, decide whether an ideal is maximal, find annihilators, and test the injectivity and surjectivity of ring homomorphisms. These problems appear to be hard classically.
\end{abstract}

\section{Introduction}

A non-commutative ring has both non-commutative monoid structure with respect to multiplication and Abelian group structure with respect to addition, and both operations are compatible in the sense that they satisfy the distributive law. Here we consider the application of quantum techniques to solve problems regarding ideals in finite rings, which need not be commutative. The ideals are specified by a list ring elements that generate the ideal under addition and multiplication by arbitrary ring elements. Despite the involvement of non-commutative multiplication, these problems can be solved by leveraging efficient quantum algorithms for the Abelian hidden subgroup problem. The first step in these algorithms is to find a set of ring elements that generates the ideal under addition only. This generalizes a previous quantum algorithm by Arvind \emph{et. al} that finds an additive generating set for the whole ring and thereby solves some classically hard ring problems~\cite{Arvind}.

In the main body of this paper we consider various problems regarding ideals in non-commutative rings and present quantum reductions to Abelian hidden subgroup problems. These reductions can be built in natural and simple ways from fundamental quantum primitives such as the Hadamard test. From a complexity-theoretic point of view it is worth noting that these problems can also be reduced to the Abelian hidden subgroup problem using purely classical polynomial-time reductions. Efficient classical algorithms for solving systems of linear Diophantine equations power the reductions by which problems can be solved using these black boxes. We illustrate this alternative approach in section \ref{sec:alt}.

As shown in \cite{Kayal}, both integer factorization and graph isomorphism reduce to the problem of counting automorphisms of
rings. This counting problem is contained in AM$\cap$coAM. Therefore it is unlikely to be NP-hard. Integer factorization also reduces to the problem of finding nontrivial automorphisms of rings and to the problem of finding isomorphisms between two rings. Furthermore, graph ismorphism  reduces to ring isomorphism for commutative rings. Thus these ring automorphism and isomorphism problems are attractive targets for quantum computation. Perhaps the quantum algorithms given in this paper can serve as steps toward efficient quantum algorithms for some of these problems.

Let $R$ be a finite ring with identity, which need not be commutative. Let $\tilde{R} = \{r_1,\ldots,r_n\}$ be a subset of $R$ such that each element of $R$ can be obtained by some sequence of additions and multiplications of elements of $\tilde{R}$. We say that $\tilde{R}$ is a generating set for $R$. Let $I$ be the left ideal in $R$ generated by $\tilde{I}$. That is, $I$ is the smallest subset of $R$ containing $\tilde{I}$ that is closed under addition and closed under left multiplication by elements of $R$. Throughout this paper we mainly discuss left ideals. One can similarly define right ideals and two-sided ideals, and the generalization of our algorithms to these cases is a straightforward generalization. Note that $R$ is itself an ideal in $R$.

A left ideal $I$ in a finite ring $R$ forms an Abelian group $(I,+)$ under addition. Any generating set $\{a_1,\ldots,a_\ell\}$ for an Abelian group $A$ yields a homomorphism from $\mathbb{Z}_{s_1} \times \ldots \times \mathbb{Z}_{s_\ell}$ to $A$ where $s_1,\ldots,s_l$ are the orders of $a_1,\ldots,a_\ell$. In additive notation, this homomorphism takes the integers $z_1,\ldots,z_\ell$ to $\sum_{j=1}^\ell z_j a_j$. The structure theorem for finite Abelian groups states that there exists a generating set for $A$ such that this homomorphism is an isomorphism. We call this a {\em generating set of the invariant  factors}, or i.~f. generating set for short. The first step in the algorithms given  in this paper is an efficient quantum algorithm to find an i.~f. generating set for $(I,+)$. This works similarly to the efficient quantum algorithm of \cite{Arvind} for finding a basis representation of $R$. No polynomial time classical algorithm for these problems are known.

The computational difficulty of problems on rings may depend on how the algorithm is allowed to access the ring. We assume only black box access to the ring. That is, the ring elements are assigned arbitrary bit strings by some injective map $\eta$ and we have access to black boxes implementing $f_+(\eta(a),\eta(b)) = \eta(a+b)$ and $f_\times(\eta(a),\eta(b)) = \eta(a \times b)$. The ideal $I$ is specified by a list of generators $\tilde{I} = \{i_1,\ldots,i_m\}$ with $m = O(\log |R|)$. Given these inputs, our method for finding an i.~f. generating set for $(I,+)$ proceeds in two steps. First we find a generating set for $(I,+)$. Although the elements of $\tilde{I}$ generate $I$ as an ideal, they do not generate $I$ as an Abelian group, that is, by addition only with no left-multiplication by $R$ elements. After finding a generating set for $(I,+)$ we then convert it to an i.~f. generating set for $(I,+)$ using the quantum algorithms of \cite{Cheung_Mosca, Watrous}.

\section{Finding a generating set of invariant factors}
\label{sec:if}

To find a generating set for $(I,+)$, let $\tilde{B}_1 = \tilde{I}$ and apply the following iteration. Let $B_k$ be the Abelian group additively generated by $\tilde{B}_k$. At the $k^{\mathrm{th}}$ step we search for an element $i \in I$ not contained in $B_k$. If we find one, we let $\tilde{B}_{k+1} = \tilde{B}_k \cup \{i\}$. For some sufficiently large $k$, $B_k = I$, at which point the search for $i$ fails and the process terminates. We now show in detail how this works and that we need at most $\log_2|I|$ iterations.

Suppose we know $\tilde{B}_k$. To find an element of $I$ not contained in $B_k$, we check for each $r\in\tilde{R}$ and each $b\in\tilde{B}_k$ if $rb\in B_k$. To this end, we create the superposition 
\[
  \ket{B_k} = \frac{1}{\sqrt{|B_k|}} \sum_{x \in B_k} \ket{x}
\]
and use the Hadamard test to check if it is invariant under the translation $\ket{x} \mapsto \ket{rb + x}$.  If $rb\in B_k$, the state is invariant, and we always measure $1$.  Otherwise the state is mapped to an orthogonal state, and we measure $0$ or $1$ with equal probabilities.  Repeating the Hadamard test multiple times, we can determine if $rb\in B_{k}$ with exponentially small error probability. Because $B_k$ is an Abelian group whose generators we know, $\ket{B_k}$ can be created efficiently to polynomial precision using the results of \cite{Cheung_Mosca, Watrous}.

If $rb\in B_k$ for all $r\in\tilde{R}$ and all $b\in\tilde{B}_k$, we stop since $I=B_k$ as shown by the lemma below. If $rb\not\in B_k$ for some $r\in\tilde{R}$ and some $b\in\tilde{B}_k$, we set $\tilde{B}_{k+1} = \tilde{B_k} \cup \{rb\}$ and continue with this new generating set. By Lagrange's theorem, we know that $\frac{|B_{k+1}|}{|B_k|}\ge 2$ since $B_k$ is a strict subgroup of $B_{k+1}$.  

\begin{lemma}
Let $I$ be a left ideal generated by $\{i_1,\ldots,i_m\}$ in a finite ring $R$.  Let $\tilde{B}_k$ be a subset of $I$ containing $\{i_1,\ldots,i_m\}$. The set of ring elements $B_k$ additively generated by $\tilde{B}_k$ is equal to $I$ if and only if $r b\in B_k$ $\forall r \in \tilde{R}$ and $\forall b \in\tilde{B}_k$.
\end{lemma}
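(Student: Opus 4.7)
The plan is to prove the two directions separately, with the forward implication being essentially by definition and the reverse requiring a small inductive argument on the word length of an element of $R$ in terms of the generators $\tilde R$.

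For the forward direction, assume $B_k = I$. Since $I$ is a left ideal of $R$, we have $rI \subseteq I$ for every $r \in R$, and in particular for every $r \in \tilde R$. Thus $rB_k \subseteq B_k$ as required.

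For the reverse direction, the strategy is to show that the hypothesis $rB_k \subseteq B_k$ for every $r \in \tilde R$ already forces $B_k$ to be a left ideal of $R$; since $\tilde B_k$ contains the ideal generators $\{i_1,\ldots,i_m\}$ by assumption, this forces $I \subseteq B_k$. The opposite inclusion $B_k \subseteq I$ is immediate: $\tilde B_k \subseteq I$ and $I$ is closed under addition, so the additive span $B_k$ lies inside $I$. Combining gives $B_k = I$. The content of the argument is therefore the upgrade from closure under multiplication by $\tilde R$ to closure under multiplication by all of $R$.

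I would do this upgrade by induction on the length of a ring-theoretic expression in $\tilde R$. The base case is exactly the hypothesis. For the inductive step, suppose $r = s + t$ or $r = s \cdot t$ where $s$ and $t$ both satisfy $sB_k \subseteq B_k$ and $tB_k \subseteq B_k$. In the additive case, for $b \in B_k$ we have $rb = sb + tb \in B_k$ because $B_k$ is additively closed by construction. In the multiplicative case, $rb = s(tb) \in s B_k \subseteq B_k$. Hence $rB_k \subseteq B_k$ for every $r \in R$, so $B_k$ is a left ideal. The only subtlety I anticipate is handling additive inverses, since a generating set under addition and multiplication need not explicitly contain negatives; this is harmless because $R$ is finite, so for any $b \in B_k$ of additive order $n$ one has $-b = (n-1)b \in B_k$, and therefore the additive span $B_k$ is already an additive subgroup, not merely an additive submonoid. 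With that observation the induction goes through cleanly and the lemma follows.
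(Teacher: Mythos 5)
Your proof is correct and takes essentially the same route as the paper: both directions are argued identically, with the key step being that closure under left multiplication by $\tilde R$ upgrades to closure under all of $R$, making $B_k$ a left ideal that contains the generators and is contained in $I$. The only difference is that you spell out the upgrade step as an explicit induction on word length (and note the finiteness argument for additive inverses), where the paper simply asserts it from $\tilde R$ being a generating set.
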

\begin{proof}
First, if $r b\in B_k$ for all $b\in \tilde{B}_k$ (for an arbitrary but fixed $r\in\tilde{R})$, then $r B_k \subseteq B_k$.  Second, if $rB_k \subseteq B_k$ for all $r \in \tilde{R}$ then, because $\tilde{R}$ is a generating set for $R$, $rB_k \subseteq B_k$ for all $r \in R$. Thus, $B_k$ is a left ideal in $R$. By construction, $B_k$ contains $i_1,\ldots,i_m$. By the definition of generators, $I$ is the smallest left ideal in $R$ containing $i_1,\ldots,i_m$. $B_k$ is also contained in
$I$. Thus $B_k = I$. The converse follows immediately from the fact that $I$ is a left ideal.
\end{proof}

In the above procedure, the number of trials needed to obtain each additive generator is at most $|\tilde{R}| \cdot |\tilde{B}_k|$ since in the worst case we have to test $r b$ for all pairs $r\in\tilde{R}$ and $b\in\tilde{B}_k$. Furthermore, every time we add another generator, we increase the size of the generated group by at least a factor of two. Thus, we need to perform the above iteration at most $\log_2|I|$ times. The resulting overall cost is $\mathrm{poly}(\log |R|)$.

We can also in polynomial time obtain expressions for the elements of this set in terms of the original generators for $I$ by recursively composing the expressions we obtained at each step for $i$ in terms of the preceding generators $B_k$.

\section{Decomposition into additive generators}

Once we have a set $\tilde{B}_k$ of elements that generate $I$ as an Abelian group, we can efficiently find an i.~f. generating set for $(I,+)$, as well as expressions for the i.~f. generators as linear combinations of $\tilde{B}_k$ using the techniques of \cite{Cheung_Mosca, Watrous}. These techniques also efficiently yield the additive orders of the i.~f. generators.

After finding an i.~f. generating set for $(I,+)$, one would like to have a procedure to take a given element $i \in I$ and decompose it as a linear combination of these generators. Note that $i$ is given as an arbitrary bit string from the encoding $\eta$, so initially we know nothing about $i$. We can efficiently perform this decomposition as described below.

Let $G = \mathbb{Z}_{s_1} \times \mathbb{Z}_{s_2} \times \ldots \times \mathbb{Z}_{s_\ell} \times \mathbb{Z}_{s}$, where $s_1,\ldots,s_\ell$ are the orders of the i.~f. generators $h_1,\ldots,h_\ell$ and $s$ is the order of $i$. Let 
\[
  f(n_1,n_2,\ldots,n_\ell,m) = \eta \left( \sum_{j=1}^{\ell} n_j h_j + mi \right).
\]
This function hides the cyclic subgroup of $G$ generated by
\[
  (n_1(i),n_2(i),\ldots,n_\ell(i),-1),
\]
where $n_1(i),\ldots,n_\ell(i)$ is the decomposition of $i$ in terms of the i.~f. generators:
\[
  i = \sum_{j=1}^{\ell} n_j(i) h_j.
\]
Using the polynomial time quantum algorithm for the Abelian hidden subgroup problem \cite{Nielsen_Chuang}, we thus recover this decomposition.

The multiplication in $I$ can be fully specified by the tensor $M_{ij}^k$ defined by
\[
  h_1 h_j = \sum_{k=1}^\ell M_{ij}^k h_k.
\]
We can compute all $l^3$ of the entries of $M_{ij}^k$ by taking each pair $h_i,h_j$, using the multiplication oracle to find
the bit string encoding their product, and then using the Abelian hidden subgroup algorithm to decompose the element represented by the resulting bit string, as described above. Together, the i.~f. generators for $I$, their orders, and the multiplication tensor are called a basis representation for $I$. The previous work of Arvind \emph{et al.}  shows how to efficiently quantum compute a basis representation in the special case that $I$ is the entire ring $R$ \cite{Arvind}. The best existing classical algorithm for this problem requires order $|R|$ queries \cite{Zumbragel}.

\section{Application to ideal problems}

Given a basis representation for an ideal $I$ it is straightforward to construct a uniform superposition $\ket{I}$ over all elements of $I$. By constructing the superpositions $\ket{I}$ and $\ket{J}$ for two ideals $I$ and $J$ we can determine whether $I=J$ using the swap test. By Lagrange's theorem, if $I \neq J$ then $\braket{I}{J} \leq 1/2$. Thus we need only use
$O(\log(\epsilon))$ swap tests to ensure that the chance of falsely concluding $I=J$ is at most $\epsilon$.

A reversible circuit for addition performs the unitary transformation $U_+ \ket{a} \ket{b} = \ket{a} \ket{a+b}$. Thus, $U_+ \ket{a}\ket{J} = U_+ \ket{a}\ket{a+J}$, where $\ket{a+J}$ is a superposition over the coset $a+J$. Thus, after constructing $\ket{I}$ and being given a ring element $r$, we can use the addition black box to construct the coset state $\ket{r+I}$. If $r \in I$ then the inner product of these states is one, and otherwise it is zero. Thus, the swap test on $\ket{I}$ and $\ket{r+I}$ tells us whether $r \in I$. Given $r \in R$, let $Rr$ be the left ideal in $R$ generated by $r$. $Rr = R$ if and only if $r$ is a unit. If $Rr \neq R$ then $Rr$ contains at most half the elements of $R$. Thus one can determine whether a given $r \in R$ is a unit by constructing $\ket{Rr}$ and $\ket{R}$ and comparing them using the swap test. If $r$ is a unit, then we can find its inverse using the quantum order finding algorithm \cite{Shor_factoring}. If $r^c = \id$ then $r^{-1} = r^{c-1}$.

Suppose $r$ is contained in the ideal $I$. To obtain an explicit construction for $r$ in terms of the generators of $I$, we can first obtain a basis representation for $I$. We can obtain an expression for $r$ as a linear combination of the basis for $I$ by solving the Abelian hidden subgroup problem. From the algorithm for obtaining a basis representation for $I$ we also obtain expressions for the basis elements in terms of the original generators of $I$. Thus one can efficiently convert the expression for $r$ as a linear combination of the basis representation for $I$ into an expression for $r$ in terms of the original generators for $I$.

Suppose we are given generating sets for two ideals $I$ and $J$. We wish to find a basis for $I \cap J$. By techniques described above, we can create the superposition $\ket{J}$ over all elements of $J$, and we can find a basis representation for $I$. If $a \in J$ then $\braket{a+J}{J} = 1$. Otherwise $\braket{a+J}{J} = 0$. Hence applying addition to the state $J$ is an operation that ``hides'' the subgroup $(I \cap J,+)$ of the group $(I,+)$ of inputs. Thus, one can use the quantum algorithms for the Abelian hidden subgroup problem \cite{Nielsen_Chuang} to find a set of generators for $(I \cap J,+)$. From this we easily extract a basis representation. (Typically in a hidden subgroup problem one is given a blackbox that maps group elements to classical bit strings. This map is constant and distinct on cosets of the hidden subgroup. However, examining the algorithm of \cite{Nielsen_Chuang}, one sees that it works just the same if the blackbox maps the different cosets to any set of orthogonal states, the classical bit string states being just a special case.)

If $I$ and $J$ are two ideals in $R$, one defines $(I:J) = \{x \in R| xJ \subseteq I\}$. $(I:J)$ is an ideal, and is called an ideal quotient or a colon ideal. $(I:J)$ is a subgroup of $(R,+)$. Let $U$ be the unitary transformation defined by $U\ket{x} \ket{y_1} \ldots \ket{y_m} = \ket{x} \ket{xj_1+y_1} \ldots \ket{xj_m+y_m}$ for all $x,y_1,\ldots,y_m \in R$. Given quantum black boxes for arithmetic on $R$, $U$ can be efficiently implemented by a quantum circuit. The states $\ket{xj_1 + I} \ldots \ket{xj_m+I}$ and $\ket{yj_1 + I} \ldots \ket{yj_m+I}$ are identical if $x$ and $y$ belong to the same coset of
$(I:J)$ in $(R,+)$ and are orthogonal if $x$ and $y$ come from different cosets. Thus, we can efficiently find an additive generating set for $(I:J)$ by solving the Abelian hidden subgroup problem using $U$ to hide $(I:J)$.

The left annihilator $A_S$ of $S = \{s_1,\ldots,s_n\} \subseteq R$ is defined as $A_S = \{x \in R|xs_1 = 0, \ldots, xs_n = 0 \}$. $A_S$ forms a subgroup of $(R,+)$. The function on $R$ given by $f_S(x) = (xs_1,\ldots,xs_n)$ hides this subgroup. Thus, after finding an i.~f. generating set for $R$ one can use the quantum algorithm for the Abelian hidden subgroup problem to find generators for any annihilator provided $S$ is at most polynomially large. The same method will work if $S$ is given by a polynomially large set of additive generators. 

Given generators for an ideal $I$ in a finite ring, we can find the order of $I$, by finding an i.~f. generating set for it and taking the product of the orders of the generators. Finding the order of a ring is a special case, as any ring is an ideal in itself.

Suppose we are given a black-box implementing a homomorphism $\rho:R \to R'$ between two rings. Determining whether $\rho$ is injective is an Abelian hidden subgroup problem, where the kernel of $\rho$ is the hidden subgroup in $(R,+)$. $\rho$ is injective if and only if its kernel is $\{0\}$. We can efficiently find generators for the kernel of $\rho$ by finding an i.~f. generating set for $R$, and then solving the Abelian hidden subgroup problem. To determine whether $\rho$ is surjective, we first compute the order of $R'$.  Similarly, the image of $\rho$ is a ring. If $R$ is generated by $\{r_1,\ldots,r_n\}$ then $R'$ is generated by $\{\rho(r_1),\ldots,\rho(r_n)\}$. After querying the homomorphism black-box to obtain the generators $\{\rho(r_1),\ldots,\rho(r_n)\}$ we can compute the order of the ring they generate ($R'$) as described in the preceding paragraph. $\rho$ is surjective if and only if the order of the image of $\rho$ equals the order of $R'$.

Suppose we wish to solve a linear equation $ax=b$ over $R$. To do this we find an i.~f. generating set $\{h_1,\ldots,h_\ell\}$ for $R$, and decompose $a$ and $b$ in terms of these generators
\[
  \begin{array}{lccr}
  a = \sum_{i=1}^{\ell} a_i h_i & \quad & \quad &
  b = \sum_{i=1}^{\ell} b_i h_i.
  \end{array}
\]
Let
\[
  A_{ij} = \sum_k a_k M_{kj}^i
\]
where $M_{kj}^i$ is the multiplication tensor from the basis representation. Parametrize $x$ as $x = \sum_{i=1}^{\ell} x_i h_i$ for integers $x_1,\ldots,x_\ell$. Then, in an i.~f. generating set, $ax=b$ if and only if
\begin{equation}
  \label{mods}
  \sum_{j=1}^\ell A_{ij} x_j \equiv b_i \mod s_i,
\end{equation}
for each $i=1,2,\ldots,\ell$. (Here $s_i$ is the additive order of $h_i$.) We can introduce additional integer unknowns $k_1,\ldots,k_\ell$ and rewrite this as a system of linear Diophantine equations:
\begin{equation}
  \label{system}
  \sum_{j=1}^\ell A_{ij} x_j + k_i s_i = b_i,\quad i=1,2,\ldots,\ell.
\end{equation}
A solution to a system of $m$ Diophantine equations in $n$ variables can be found in $\mathrm{poly}(n,m)$ time using the classical algorithms of \cite{Chou}. Thus we can classically find an integer solution to equation \ref{system}, which has $\ell$ equations and $2\ell$ unknowns, in $\mathrm{poly}(\ell)$ time. Equation \ref{system} is undertedermined because the original system of equations \ref{mods} is modular.

By a similar technique, we can find the identity in $R$. Again suppose we have computed a basis representation for $R$. Since the basis representation has the following property,
\[
  n_1 h_1+\ldots+n_\ell h_\ell=h_{\alpha} \Rightarrow 
  n_{\beta}=\delta_{\alpha\beta} \quad 1\leq \beta\leq \ell
\]
where $n_i\in \mathbb{Z}_{s_i}$, an element $r = \sum_{i=1}^{\ell} r_i h_i$ is the identity if and only if
\[
  \sum_{i=1}^{\ell} r_i M_{ij}^k \equiv \delta_{jk} \mod s_k
\]
for all $j,k=1,2,\ldots,\ell$. This is again a system of linear modular equations, which we can convert to a system of linear Diophantine equations that we solve in polynomial time using \cite{Chou}. Note that the quantum algorithm of \cite{Arvind_ident} solves a very different problem although the authors refer to it as identity testing.

In a black box ring, finding the additive identity is also nontrivial. Because all ring elements have additive inverses, we can choose any $r \in R$, find its order $c$ using the quantum order finding algorithm \cite{Shor_factoring}, find the additive inverse of $r$ by computing $(c-1)r$, and find the additive identity by computing $cr$. The computation of $cr$ and $(c-1)r$ requires $O(\log_2 c)$ queries to $f_+$. 

We now show how to efficiently determine whether a given two-sided ideal $I$ is prime. Recall that an ideal $I$ is prime if
$ab\in I$ implies that $a\in I$ or $b\in I$ for all $a,b\in R$, which is equivalent to the fact that the quotient ring $S=R/I$ does not have any zero-divisors. This already implies that $S$ is a division ring (i.e., each non-zero element has a multiplicative inverse) since $S$ is finite.  Wedderburn's theorem shows that all finite division rings are finite fields \cite{Lidl_Niederreiter}. $R/I$ a field implies $I$ is maximal, thus $I$ is prime implies $I$ is maximal. The converse is also true. 

Let $S^*$ denote the group of units of the quotient ring $S$.  We choose an element $r$ uniformly at random in $R$. With probability at least $1/2$ we have $r\not\in I$.  Once we obtain such $r$ we determine the size of the (additively generated) cyclic subgroup $\langle \bar{r} \rangle$ of $S$, where $\bar{r}$ denotes the image of $r$ in $S$ under the canonical projection. This can be done by applying Shor's period finding algorithm to the state $(1/\sqrt{q}) \sum_{x=0}^ q |x\rangle |xr + I\rangle$) where $q$ is a power of $2$ with $|S|^2 < q\le 2|S|^2$.  This state can be prepared efficiently.

If $S$ is a field, then with probability at least $\varphi(|S|-1)/|S| \ge \Omega(1/\log |S|)$ we have $\langle \bar{r} \rangle = S^*$ where $\varphi$ denotes Euler's totient function. This follows from the fact that the group of units $\mathbb{F}_d^*$ of an arbitrary finite field $\mathbb{F}_d$ with $d$ element is cyclic of order $d-1$ and $\varphi(m)/m = \Omega(1/\log m)$ for integers $m$ \cite{Hardy}. If $S$ is not a field, then $S^*$ cannot have order $|S|-1$ (otherwise every non-zero element would have a multiplicative inverse, implying that $S$ is a field). If we find that $S$ is a field then we know $I$
is prime, otherwise $I$ is not prime. The above procedure for determining whether the quotient ring $S$ is a field can be applied to any finite blackbox ring, offering a simpler alternative to the quantum algorithm in \cite{Arvind}.

Our quantum algorithms for rings $R$ also extend to $R$-modules. Beyond this, we conjecture that our quantum algorithms
apply to any category posessing a faithful functor to the category of Abelian groups. 

It would be interesting to find efficient quantum algorithms for deciding whether a given ideal $I$ is principal and computing the group of units $R^*$ of $R$. The quantum algorithms in \cite{Cheung_Mosca, Watrous} make it possible to determine the structure of any finite abelian black-box group according to the structure theorem. So, the question arises naturally whether a similar quantum algorithm exists for decomposing finite black-box rings. More precisely, is it
possible to efficiently learn the structure of a finite black-box ring according to a structure theorem in ring theory such as the Wedderburn-Artin theorem \cite{McDonald}?

It would be worthwhile to investigate whether the above algorithms extend to the case of infinite rings.  It is not obvious that we can consider arbitrary infinite rings. However, it seems likely that the above algorithms could be extended to a black-box ring $R$ which is endowed with a grading by Abelian groups $R_0,R_1,R_2,\ldots$ and each component $R_g$ is finite.  Additionally, we would need a promise, making it possible to do all the computations in a component $R_g$ for
some $g$. For example, such a situation occurs for polynomial rings over a finite field when the number of indeterminates is fixed. The complexity of the algorithms would then depend on the growth of the Hilbert function, which measures the dimension of the graded components $R_g$ as $R_0$-modules.

\section{Alternative Approach}
\label{sec:alt}

Rather than directly developing quantum algorithms for the solution of problems about ideals, it is also possible to solve these problems by applying the quantum algorithm of \cite{Arvind} to obtain a basis representation for the underlying ring $R$, decomposing all ring elements specifying the problem input as integer linear combinations of this basis, and then reling on efficient classical algorithms for solving systems of linear Diophantine equations. In other words, rather than a direct quantum attack, as described in prior sections, the solution of problems regarding ideals in black box rings can be conceptualized in terms of a black box for Abelian hidden subgroup problems plus classical algorithms for solving systems of linear Diophantine equations.

Here, we illustrate this principle for the problem of deciding whether a given ring element is in a given ideal. The solutions to the other problems regarding ideals are similar in spirit and can be obtained using known techniques for manipulating systems of linear Diophantine equations such as those described in section II of \cite{Schrijver}.

Suppose we are given access to a black box ring $R$, a ring element $r \in R$ and a generating set $\tilde{I}$ for an ideal $I$ in $R$. We wish to determine whether $r \in I$. Rather than directly attacking this problem with the quantum algorithm described above, we can instead start by using the quantum algorithm of \cite{Arvind} to obtain a basis representation for $R$. A basis representation for $R$ consists of:
\begin{itemize}
  \item an i.~f. generating set $\{r_1,\ldots, r_\ell\}$ for $R$
  \item the additive orders of the elements $\{r_1,\ldots, r_\ell \}$
  \item the multiplication tensor $M_{ij}^k$ such that $r_i r_j = \sum_k M_{ij}^k r_k$.
\end{itemize}

An instance of the problem of deciding whether a given ring element $r$ is contained in an ideal is specified by $r$ and the set $\tilde{I}$ of ring elements that generate $I$. In the black box setting these are specified by their labels, which are arbitrary bit strings. With a quantum computer, we can efficiently decompose these ring elements as integer linear combinations of the additive generators $\{r_1,\ldots, r_\ell\}$ by solving Abelian hidden subgroup problems.

After the quantum preprocessing steps to obtain an i.~f. generating set for $R$ and decompositions of all ring elements defining the problem instance, we can solve the problem using only classical algorithms for solving systems of linear Diophantine equations.

The first step is to construct an i.~f. generating set for $I$. This can be done by the same process described in section \ref{sec:if}, except that to test whether a given element of $\tilde{R}$ is contained in $B_k$, we use classical algorithms for deciding whether it is in the integer linear span of the i.~f. generators for $B_k$ rather than relying on Hadamard tests.

Specifically, let $\{ b_1, \ldots, b_{n(k)} \}$ denote the elements of $\widetilde{B}_k$ and let $b_i = \sum_j \beta_{i,j} h_j$ with $\beta_{i,j} \in \mathbb{Z}$ denote their decompositions in terms of the i.~f. generators $\{h_1,\ldots,h_\ell\}$ for $R$. Similarly, for a generator $r' \in \tilde{R}$, let $r' = \sum_j \rho_j' h_j$ denote its decomposition in terms of the i.~f. generators of $R$. Then, $r' \in B_k$ if and only if there exists $z_1,\ldots,z_{n(k)} \in \mathbb{Z}$ such that
\begin{eqnarray*}
  z_1 \beta_{1,1} + \ldots + z_{n(k)} \beta_{n(k),1} & \equiv & \rho_1' \mod s_1 \\
  & \vdots & \\
  z_1 \beta_{1,\ell} + \ldots + z_{n(k)} \beta_{n(k),\ell} & \equiv & \rho_1' \mod s_\ell.
\end{eqnarray*}
This system of linear congruences can be converted to an equivalent system of linear Diophantine equations simply by introducing slack variables $\{v_1, \ldots, v_\ell\}$:
\begin{eqnarray*}
  z_1 \beta_{1,1} + \ldots + z_{n(k)} \beta_{n(k),1} + v_1 s_1 & = & \rho_1' \\
  & \vdots & \\
  z_1 \beta_{1,\ell} + \ldots + z_{n(k)} \beta_{n(k),\ell} + v_\ell s_\ell & = & \rho_1'.
\end{eqnarray*}
Determining whether such a linear system has a solution over the integers (and, if so, finding it) can be done classically in polynomial time by reduction to Hermite normal form, as discussed in e.g. \cite{Frumkin, Hafner, Schrijver}.

Once this process converges upon a complete set of additive generators for $I$, one can then test whether $r$ is in $I$ by using these same classical algorithms to determine whether its decomposition $\rho_1,\ldots, \rho_\ell$ is within the integer linear span of the decompositions of these generators, modulo the appropriate additive orders.

\medskip
{\bf Acknowledgements}. This work was mainly carried out in 2009 while P.W.~was with the School of Electrical Engineering and Computer Science, University of Central Florida (UCF), Orlando, Florida, S.P.J.~with the Institute for Quantum Information, Caltech, Pasadena, and H.A.~and J.P.B.~with the Department of Mathematics, UCF. The updated version fixes a missing analysis of a certain subcase that was pointed out by Ethan Martin in 2023 \cite{Martin23}.  The authors would like to thank him for his valuable feedback. 

P.W.\ and H.A.\ gratefully acknowledge the support of NSF grants CCF-0726771 and CCF-0746600. S.J.\ gratefully acknowledges support from the Sherman Fairchild foundation and the NSF under grant PHY-0803371. 

\medskip

\bibliography{rings}

\end{document}